\documentclass[10pt]{article}

\usepackage[letterpaper,textwidth=5.5in,textheight=9in,top=1in,headheight=12pt,headsep=0.25in,centering]{geometry}
\usepackage{fancyhdr}
\usepackage[explicit]{titlesec}
\usepackage[T1]{fontenc}
\usepackage[utf8]{inputenc}
\usepackage{mathptmx}
\usepackage[scaled=0.92]{helvet}
\usepackage{microtype}

\usepackage{amsmath,amssymb,amsthm}
\usepackage{bm}

\usepackage{booktabs}
\usepackage{makecell}
\usepackage{array}
\usepackage{enumitem}
\usepackage[font=small,labelfont=bf,skip=5pt]{caption}

\usepackage{xcolor}
\usepackage{graphicx}
\usepackage{tikz}
\usetikzlibrary{arrows.meta,positioning,calc,decorations.pathreplacing}
\usepackage{pgfplots}
\pgfplotsset{compat=1.16}
\usepgfplotslibrary{groupplots}
\usepackage{algorithm}
\usepackage[noend]{algpseudocode}

\usepackage[numbers,sort&compress]{natbib}
\usepackage[colorlinks=true,linkcolor=linkblue,citecolor=linkblue,urlcolor=linkblue]{hyperref}
\usepackage[capitalise,nameinlink]{cleveref}

\definecolor{probe}{HTML}{2F6FAF}
\definecolor{detect}{HTML}{B35C10}
\definecolor{linkblue}{HTML}{1F4E8C}
\definecolor{softgray}{HTML}{F4F5F7}
\definecolor{midgray}{HTML}{8A8F98}
\definecolor{okgreen}{HTML}{2E7D4F}
\definecolor{badred}{HTML}{B03A2E}

\theoremstyle{plain}
\newtheorem{proposition}{Proposition}

\theoremstyle{definition}
\newtheorem{definition}{Definition}

\newcommand{\pwrong}{p_{\mathrm{wrong}}}
\newcommand{\Nflag}{N_{\mathrm{flag}}}
\newcommand{\pistar}{\pi^{\ast}}
\newcommand{\cmark}{\textcolor{okgreen}{\ensuremath{\checkmark}}}
\newcommand{\xmark}{\textcolor{badred}{\ensuremath{\times}}}
\newcommand{\pmark}{\textcolor{midgray}{\ensuremath{\sim}}}
\newcommand{\type}[1]{\textsc{#1}}

\setlist{itemsep=1pt,topsep=3pt}

\fancypagestyle{plain}{\fancyhf{}\fancyfoot[C]{\thepage}}

\titleformat{\section}{\normalfont\large\bfseries}{\thesection}{1em}{#1}
\titleformat{name=\section,numberless}{\normalfont\large\bfseries}{}{0pt}{#1}
\titleformat{\subsection}{\normalfont\normalsize\bfseries}{\thesubsection}{1em}{#1}
\titleformat{\paragraph}[runin]{\normalfont\normalsize\bfseries}{}{0pt}{#1}
\titlespacing*{\section}{0pt}{2.2ex plus 0.6ex minus 0.3ex}{1.2ex}
\titlespacing*{\subsection}{0pt}{1.8ex plus 0.5ex minus 0.3ex}{0.9ex}
\titlespacing*{\paragraph}{0pt}{1.2ex plus 0.4ex minus 0.2ex}{0.8em}

\makeatletter
\renewcommand{\maketitle}{%
  \thispagestyle{plain}%
  \begin{center}
  {\LARGE\bfseries Detection Is Not Enough:\\[3pt] Reasoning Traces as Testable Program Hypotheses\par}
  \vspace{1.4em}
  {\large Vaidehi Hegde}\\[3pt]
  Indian Institute of Technology (Indian School of Mines) Dhanbad\\
  \texttt{vaidehihegde05@gmail.com}
  \end{center}
  \vspace{1.2em}}
\renewenvironment{abstract}{%
  \begin{center}{\large\bfseries Abstract}\end{center}\vspace{-0.4em}%
  \begin{list}{}{\setlength{\leftmargin}{0.5in}\setlength{\rightmargin}{0.5in}\setlength{\listparindent}{0pt}\setlength{\parsep}{0pt}}\item\relax}
  {\end{list}\vspace{0.6em}}
\makeatother

\hypersetup{pdftitle={Detection Is Not Enough: Reasoning Traces as Testable Program Hypotheses},pdfauthor={Vaidehi Hegde}}

\begin{document}
\maketitle

\begin{abstract}
When a language model is deployed, no reference solution exists, so the reasoning trace it writes is the only evidence for judging its answer.
The reference-free substitutes each fail differently: self-assessment degrades the answer it was meant to protect, sampling agreement measures stability rather than correctness, and perturbation studies read the expected answer off a human-written template, grading a model rather than a trace.
We ask how far a single trace can be verified with no reference solution, no judge model and no weight update, and find, for a frozen LLaMA-2-7B-chat on GSM8K, that detection is not what limits such a pipeline.
Checking a trace against itself reaches an AUROC of 0.741 and stops there for a structural reason: two thirds of wrong traces state no false equation, having executed a mistaken plan correctly, and on wrong-plan failures the detector flags a wrong trace less often than a right one.
A candidate-derived metamorphic probe exposes many of these failures: we abstract the trace into a program over the question's quantities, perturb one quantity, execute the program, and ask the model the rewritten question once, so every trace supplies its own test.
Valid perturbations exist for 937 of 1,319 test questions (71.0\%); among the 847 that also have a parseable first trace, supported traces are correct 60.6\% of the time against a 31.6\% base rate, although the probe also refutes 45.5\% of correct traces.
No tested policy yields a statistically significant improvement in accuracy.
A flag earns its retry only when detector precision exceeds $D/(R+D)$, where $R$ and $D$ are the rates at which a retry repairs a wrong trace and damages a correct one; our repair rate is so low that, at the same retry budget, even a perfect detector would recover an estimated 66 of 1,319 questions.
Repair, not detection, is the open problem.
\end{abstract}

\section{Introduction}
\label{sec:intro}

Large language models now solve multi-step mathematical problems by writing out their reasoning before they answer \citep{wei2022cot}.
Whether that reasoning can be trusted is usually decided with labelled solutions, used either to train a verifier on final answers \citep{cobbe2021gsm8k} or to supervise individual steps \citep{uesato2022process,lightman2023verify}.
At deployment those labels are absent, and the trace is often the only evidence available for judging the answer it accompanies.

\begin{figure}[t]
\centering
\includegraphics{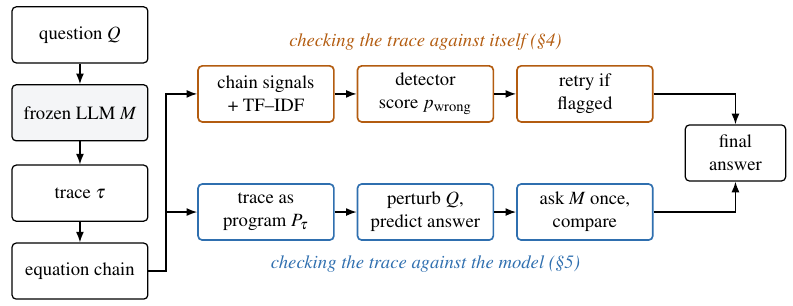}
\caption{Overview of the pipeline. A frozen model answers a question with a trace, which is parsed into a chain of arithmetic claims. The static detector (orange) decides from the trace alone whether to spend a retry. The metamorphic probe (blue) reads the trace as a program, predicts the answer to a perturbed question, and compares that prediction with the model's own answer. No component uses a reference answer or a judge model.}
\label{fig:overview}
\end{figure}

Label-free substitutes exist, but each has a known weakness.
Asking a model to critique and revise its own answer without external feedback lowers GSM8K accuracy below the first attempt \citep{huang2024selfcorrect,kamoi2024survey}, and small models need a strong external verifier before self-correction helps at all \citep{zhang2024small}.
Agreement across samples \citep{wang2023selfconsistency} measures the stability of an answer rather than its correctness, and a model that misreads a question tends to misread it the same way every time.
Internal-consistency verification instead judges a trace by whether its own arithmetic and dependencies hold together \citep{zhong2026verytrace}, and rests on the assumption that a wrong trace will, somewhere, look wrong.
We ask how far a single trace can be verified with no reference solution, no judge model and no weight update, and what the answer implies for a system that uses a verifier to decide when to try again.

We study this question on a frozen LLaMA-2-7B-chat \citep{touvron2023llama2} on GSM8K \citep{cobbe2021gsm8k}, instrumented so that every measurement is made by regular expressions and a computer algebra system (\cref{fig:overview}).
We first check traces against themselves: a static detector built from the equations a trace states reaches an AUROC of 0.741 on test, and its limit lies in the traces, since 67.1\% of failures contain no false equation.
We then check traces against the model: we read the trace as a program over the question's quantities, change one quantity, execute the program to predict the new answer, and ask the model the changed question once.
Unlike perturbation benchmarks \citep{mirzadeh2024gsmsymbolic,zhou2025checklist,zhou2026lgmt}, which take the expected answer from a hand-written template and grade a model, this probe derives its expected answer from the candidate and grades a single trace. A valid perturbation exists for 71.0\% of test questions, and our probe statistics are computed on the 847 of these that also have a parseable first trace.
Finally, we ask why better detection does not improve accuracy, and show that a flag is worth a retry only when detector precision exceeds $D/(R+D)$, where $R$ and $D$ are the rates at which a retry repairs a wrong answer and damages a correct one.

\Cref{tab:summary} summarises the findings.
The probe exposes many of the failures that self-checking misses, but for this model the binding constraint is repair: most failures contain no correct prefix to preserve, so a retry is little more than a fresh attempt at a question the model has just failed.
We therefore argue that reference-free verifiers should be evaluated by the repair and damage rates of the action they trigger, not by detection metrics alone.
All measurements come from one model, one dataset and one prompting setup, and the model is weak by current standards; \cref{sec:limitations} separates the conclusions we expect to transfer from those we do not.

\begin{table}[t]
\centering
\caption{Summary of findings.}
\label{tab:summary}
\small
\begin{tabular}{@{}p{0.27\textwidth}p{0.56\textwidth}l@{}}
\toprule
question & finding & section \\
\midrule
How far does checking a trace against itself go? & It stalls at AUROC 0.741 because 67.1\% of failures contain no false equation; on wrong-plan failures the detector fires less often than on correct traces. & \S\ref{sec:static} \\
\addlinespace
Can the missing evidence be obtained without a judge? & Partly. 937 of 1,319 questions (71.0\%) admit a valid probe. On the 847 of these with a parseable first trace, the probe refutes 81.4\% of wrong-plan and misread-question failures, and supported traces are correct 60.6\% of the time against 31.6\% in that subset. & \S\ref{sec:probe} \\
\addlinespace
Does better detection improve accuracy? & Not in our tests: no policy improves accuracy significantly. A flag pays only above precision $D/(R+D)$, and at the measured repair rate of 12.3\% a perfect detector on the same budget would gain an estimated five points. & \S\ref{sec:accuracy} \\
\bottomrule
\end{tabular}
\end{table}

\section{Background and Related Work}
\label{sec:background}

Verifying a reasoning trace means deciding, from the trace and possibly further computation, whether the answer it reaches is correct.
The literature uses the term for settings that differ in what the verifier may consult, which makes results difficult to compare.
We therefore fix the setting first and then place prior work within it.

\begin{definition}[Reference-free verifier]
A reference-free verifier is a function of the question $Q$, a trace $\tau$ produced by a frozen model $M$, and possibly further calls to the same $M$, that uses neither the reference answer, a separately trained judge language model, nor any update to the weights of $M$.
Throughout this paper, \emph{reference-free} refers to inference-time verification: the deployed procedure does not consult a reference answer for the test question. The static detector is trained using labelled training traces, whereas the metamorphic probe derives its verdict without reference-answer supervision.
\end{definition}

Under this definition, a verifier is useful only through what it causes.
We study the most common use, a \emph{detect-then-retry} policy, which flags a trace when the verifier's score passes a threshold, generates a second trace for each flagged question, and either returns the second answer (\emph{replace}) or lets a reference-free selector choose between the two (\emph{select}).

\Cref{tab:notation} collects the notation used in the rest of the paper.

\begin{table}[t]
\centering
\caption{Notation.}
\label{tab:notation}
\small
\begin{tabular}{@{}ll@{}}
\toprule
symbol & meaning \\
\midrule
$Q$, \ $\bm q=(q_0,\dots,q_k)$ & question and its numeric quantities \\
$M$, \ $\tau=M(Q)$, \ $\hat y(\tau)$ & frozen model, its trace, the trace's final answer \\
$\mathcal E(\tau)=(e_1,\dots,e_m)$ & equation chain; each claim $e_j$ is $\ell_j=c_j$ \\
$\alpha$, \ $P_\tau$ & abstraction map and the program it extracts (\cref{def:abstraction}) \\
$Q'$, \ $\bm q'$ & perturbed question and its quantities \\
$\pwrong(\tau)$ & static detector's estimate that $\tau$ is wrong \\
$\Nflag$, \ $\pi$ & number of flagged questions, and the fraction of them that are wrong \\
$R$, \ $D$ & repair rate on flagged wrong traces, damage rate on flagged correct traces \\
$\pistar=D/(R+D)$ & break-even precision (\cref{prop:identity}) \\
\bottomrule
\end{tabular}
\end{table}

Learned verifiers rank solutions by a model trained on final-answer labels \citep{cobbe2021gsm8k} or step-level labels \citep{uesato2022process,lightman2023verify}; they require annotation and a verifier model, and fall outside our setting.
Self-correction asks the model to critique and refine its own output \citep{madaan2023selfrefine}, and has been shown not to improve reasoning when nothing external indicates which answers need revising \citep{huang2024selfcorrect,kamoi2024survey}; the model acts as its own judge.
Self-consistency \citep{wang2023selfconsistency} estimates reliability through repeated answers to the same question.
Other reference-free approaches examine consistency between a model's direct answer and the answer obtained by composing a stated decomposition \citep{bottman2026operadic}, or combine step validity, support of the answer by the reasoning, and counterfactual sensitivity into a broader diagnostic \citep{shukla2026silent}.
These approaches differ in the evidence they use and in the aspects of reasoning they test.
Internal-consistency verification judges the trace directly; VeryTrace \citep{zhong2026verytrace} compiles a trace into a structured formalism, checks its calculations, dependencies and conclusion, and delegates the rest to a second model.
Our first question is what remains when that second model is removed.

Metamorphic testing validates software without an oracle by checking relations between outputs on related inputs \citep{chen1998metamorphic,segura2016survey,chen2018mtreview}, and behavioural testing brought predefined perturbations to NLP \citep{ribeiro2020checklist}.
For mathematical reasoning, GSM-Symbolic \citep{mirzadeh2024gsmsymbolic} regenerates GSM8K from templates and finds that accuracy drops when values change, MathCheck \citep{zhou2025checklist} and related diagnostics \citep{sahoo2025catch} report similar fragility, and LGMT \citep{zhou2026lgmt} counts self-contradictions across semantically equivalent rewrites.
Each of these characterises a model: the relation is written by a person and the expected answer comes from the template.
Self-verification \citep{weng2023selfverification} derives its test from the candidate, as we do, but asks a further generation for the verdict.
Finally, a trace need not describe the computation that produced its answer \citep{turpin2023unfaithful,lanham2023faithfulness,lyu2023faithful}; the probe we introduce can be read as a behavioural test of a weak form of this faithfulness.
\Cref{tab:compare} places these approaches against our definition.

\begin{table}[t]
\centering
\caption{Verification approaches for a single deployed answer. \cmark{}~yes, \xmark{}~no, \pmark{}~partly. ``No reference'' means the expected answer comes from neither a label nor a template; ``one trace'' means the verdict concerns the specific trace rather than the model or an answer distribution. The table compares properties, not performance.}
\label{tab:compare}
\small
\setlength{\tabcolsep}{5pt}
\begin{tabular}{@{}lccccc@{}}
\toprule
approach & \makecell{no\\reference} & \makecell{no trained\\verifier} & \makecell{no judge\\LM} & \makecell{one\\trace} & \makecell{extra\\generations}\\
\midrule
Outcome / process verifiers \citep{cobbe2021gsm8k,lightman2023verify} & \cmark & \xmark & \xmark & \cmark & 0 \\
Self-correction \citep{huang2024selfcorrect,madaan2023selfrefine} & \cmark & \cmark & \xmark & \cmark & $\ge 1$ \\
Self-consistency \citep{wang2023selfconsistency} & \cmark & \cmark & \cmark & \xmark & $k$ \\
Template perturbation \citep{mirzadeh2024gsmsymbolic,zhou2026lgmt} & \xmark & \cmark & \cmark & \xmark & many \\
Self-verification \citep{weng2023selfverification} & \cmark & \cmark & \xmark & \cmark & $\ge 1$ \\
VeryTrace \citep{zhong2026verytrace} & \cmark & \cmark & \pmark & \cmark & $\ge 1$ \\
\midrule
Static detector (\S\ref{sec:static}) & \cmark & \pmark & \cmark & \cmark & 0 \\
Metamorphic probe (\S\ref{sec:probe}) & \cmark & \cmark & \cmark & \cmark & $\approx 1$ \\
\bottomrule
\end{tabular}
\end{table}

\paragraph{What is new here.}
The closest prior work either verifies a trace from the inside \citep{zhong2026verytrace}, derives a check from the candidate but asks another generation for the verdict \citep{weng2023selfverification}, perturbs questions to evaluate a model against a template answer \citep{mirzadeh2024gsmsymbolic,zhou2026lgmt}, compares a direct answer with a composed one \citep{bottman2026operadic}, or includes counterfactual sensitivity as one component of an aggregate trace score \citep{shukla2026silent}.
Our mechanism combines elements of these in a specific way: we \emph{extract a program from the trace}, \emph{execute it} on a perturbed version of the question to obtain the answer the trace itself predicts, and test that prediction against the \emph{same model's} answer to the perturbed question.
The expected answer therefore comes neither from a label nor from a template, and the verdict comes neither from a judge nor from agreement among samples.
Operadic consistency \citep{bottman2026operadic} also uses the model's own decomposition, but it asks whether two routes to the \emph{same} question agree; our test changes the question and asks whether the trace's procedure, executed symbolically, predicts what the model does on the new one.
RAFS \citep{shukla2026silent} treats counterfactual sensitivity as one signal among several; in our probe the counterfactual is an executable numerical prediction with a single expected value, obtained by running the program extracted from the candidate, and the probe is studied on its own so that its effect on each failure class can be isolated.
This also distinguishes the probe from self-consistency: self-consistency compares answers to the same question and rewards an answer that is stable whether or not it is right, whereas the probe compares a prediction about a \emph{different} question with the model's behaviour on it, and so tests whether the stated procedure is the one the model follows.
We do not claim that perturbation testing or program extraction is new in itself.
Our contribution is this candidate-derived counterfactual test, together with two analyses that go beyond the mechanism: which failure classes it exposes that internal consistency cannot, and why improved detection need not improve final accuracy.

In the following sections we evaluate both kinds of reference-free verification and the pipeline they drive.
\Cref{sec:static} shows that checking a trace against itself cannot detect most failures.
\Cref{sec:probe} shows that testing the trace against the model's behaviour reaches those failures.
\Cref{sec:accuracy} shows that neither yields a significant improvement in accuracy, and explains why.

\section{Experimental Setup}
\label{sec:setup}

\paragraph{Model and data.}
All experiments use a frozen LLaMA-2-7B-chat \citep{touvron2023llama2} on GSM8K \citep{cobbe2021gsm8k}.
We chose this model because it has open weights, can be run frozen and fully instrumented on a single GPU, and fails often enough (on 72\% of test questions at the first attempt) to provide large samples of every failure type; the cost of this choice is representativeness, which we discuss in \cref{sec:limitations}.
We report all headline numbers on the 1,319 test questions.
The training split supplies labelled traces for fitting the static detector (2,568 traces, comprising first attempts and retries), the 1,068 first-attempt failures used for the failure taxonomy and for measuring repair rates, and a development set of 400 further questions held out from fitting.
Reference answers label traces on the training split only and never enter a context the model sees.

\paragraph{Prompting and retries.}
Both attempts draw few-shot exemplars from a pool of the model's own correct traces, which grows during the training run and is frozen before any test question is seen.
The first attempt, which we call \emph{pass-1}, retrieves exemplars by the question alone and solves 27.8\% (367 of 1,319) of test questions.
A retry additionally conditions its retrieval on a symbolic diagnosis of how the first attempt failed (\cref{app:taxonomy}), so it is not a plain resample.

\paragraph{Measurement.}
We parse each trace into its \emph{equation chain} $\mathcal E(\tau)=(e_1,\dots,e_m)$, the ordered list of arithmetic claims $e_j:\ell_j=c_j$ it states, and sympy marks each claim true or false.
No language model acts as a grader anywhere in scoring, diagnosis or verification.

\paragraph{Evaluation.}
Classifiers are evaluated with GroupKFold using the question as the group, so that a question's first attempt and its retry never fall on opposite sides of a split, and an assertion rejects any feature derived from the reference solution; both safeguards were added after we found the problems they prevent (\cref{app:audit}).
Paired accuracy comparisons use exact McNemar tests on the same 1,319 questions, and contingency comparisons use Fisher's exact test.
Every accuracy we report is what a deployed system would return: we never count a question as solved because one of two candidates was right, since choosing between them would require the reference answer.
\Cref{fig:overview} shows the full pipeline.

\section{Internal Consistency Cannot Detect Most Failures}
\label{sec:static}

In this section we build a static detector that checks a trace against itself, measure how well it works, and examine why it stops where it does.

\subsection{The static detector}

From the equation chain and the question we compute four signals, none of which reads a reference answer (\cref{app:signals}).
\emph{Equation validity} is the fraction of stated equations that sympy confirms; \emph{value overlap} compares the numbers the question gives with those the trace uses; \emph{goal engagement} records whether the trace ever computes the quantity the question asks for; and \emph{coherence} checks that every step refers only to values that were given or previously derived.
We concatenate these signals with TF--IDF features of the trace text and blend a logistic regression on the sparse features with a gradient-boosted model on the dense ones.
The training target is whether the final answer is wrong, and a second head predicts the type of failure.

\subsection{Results}

\begin{table}[t]
\centering
\caption{Results of the static detector. Out-of-fold estimates use GroupKFold by question on the training split.}
\label{tab:detector}
\small
\begin{tabular}{@{}lccc@{}}
\toprule
 & out of fold & development (400) & test (1,319) \\
\midrule
detection AUROC & 0.781 & 0.770 & 0.741 \\
type-head accuracy & 0.598 & -- & 0.590 \\
\bottomrule
\end{tabular}
\end{table}

\Cref{tab:detector} reports the detector's performance.
It reaches an AUROC of 0.781 out of fold, 0.770 on the development set and 0.741 on test, and its type head reaches 0.590 on test against 0.598 estimated.
At its operating cut of $\pwrong\ge0.683$ it flags 540 test questions (40.9\%) with 87.6\% precision.
Detection AUROC is 0.04 lower on test than out of fold, while the type head is within one point of its out-of-fold estimate; \cref{sec:why-static} examines why detection stops at this level.

\subsection{Why does detection stall?}
\label{sec:why-static}

\paragraph{Empirical analysis.}
Only 32.9\% of pass-1 failures contain a locatable false equation; in the remaining 67.1\%, every equation the trace states is true.
\Cref{fig:taxonomy} shows where these failures come from.
The two largest classes, a wrong plan executed correctly (\type{WP}) and a misread question solved correctly (\type{SM}), account for 47.5\% of training failures, and both have sound arithmetic.
On test, the detector flags 12\% of \type{WP} failures, which is below the 18.3\% of correct traces it flags (\cref{fig:per-type}); on this class its signal points in the wrong direction.

\begin{figure}[t]
\centering
\includegraphics{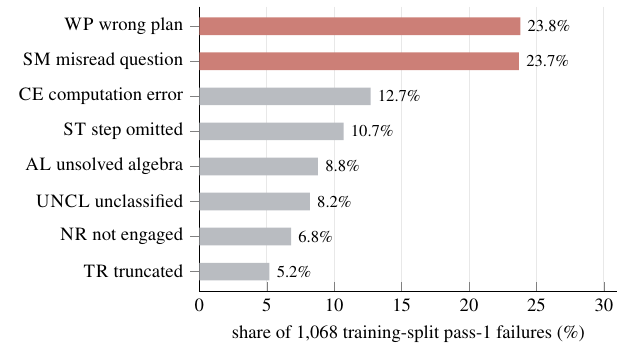}
\caption{Distribution of failure types over the 1,068 training-split pass-1 failures. The two largest classes (red) have sound arithmetic and leave no symbolic sign of the error. Definitions and per-class recall of the type head are in \cref{tab:taxonomy}.}
\label{fig:taxonomy}
\end{figure}

\paragraph{An illustrative example.}
\Cref{fig:example} shows a test question the detector cleared.
The trace multiplies the 16 eggs by the 3 eaten at breakfast, a step the problem never calls for, and then proceeds without error.
All four equations hold, every quantity in the question is used, each step consumes a value that was given or derived earlier, and the trace ends by computing an amount of money as asked.
The detector has nothing to object to and scores the trace $\pwrong=0.581$, below its cut, so no retry is spent.

\paragraph{Intuitive explanation.}
A check of a trace against itself can detect an error only if the error leaves a mark inside the trace, such as a false equation, an unused quantity or an unexplained number.
A wrong plan executed correctly leaves no such mark, because its error lies in which arithmetic was chosen rather than in how it was carried out.
For the structural signals and classifiers we evaluated, a richer model over the same evidence would therefore have little to work with on these failures.
This is a limitation of the evidence we extract, not an impossibility result for all reference-free verification, but it suggests that the information separating such a trace from a correct one is most readily found outside the trace.

\begin{figure}[t]
\centering
\fcolorbox{midgray!50}{softgray}{\parbox{0.95\linewidth}{\small
\textbf{Question.} Janet's ducks lay 16 eggs per day. She eats three for breakfast and bakes muffins with four. She sells the remainder at \$2 per egg. How much does she make? \hfill \textcolor{midgray}{(reference: 18)}\\[4pt]
\textbf{Pass-1 trace.} \ $16\times 3=48$; \ $48-3=45$; \ $45-4=41$; \ $2\times 41=82$. \ The answer is 82. \hfill \textcolor{badred}{\textbf{incorrect}}}}

\vspace{8pt}
\includegraphics{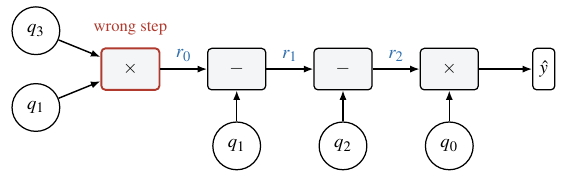}

\vspace{6pt}
\small
\begin{tabular}{@{}lcccc@{}}
\toprule
question asked & program predicts & model answers & true answer & verdict \\
\midrule
original ($q_0{=}2$, $q_3{=}16$) & 82 & 82 & 18 & -- \\
price \$2 $\to$ \$5 \ ($q_0$) & 205 & 450 & 45 & \textcolor{badred}{refute} \\
eggs 16 $\to$ 32 \ ($q_3$) & 178 & 72 & 50 & \textcolor{badred}{refute} \\
\bottomrule
\end{tabular}
\caption{A wrong-plan failure from the GSM8K test set. Every equation in the trace is true, so the static detector clears it ($\pwrong=0.581$, below the cut of 0.683). The abstraction of \cref{sec:abstraction} turns the trace into the program shown, with question slots $q_0{=}2$, $q_1{=}3$, $q_2{=}4$, $q_3{=}16$ in the parser's ordering; the first step, multiplying the eggs by the number eaten, is the wrong plan. Under both perturbations the answer the program predicts differs from the model's answer, so both probes refute the trace. The true answers are shown for illustration only; the probe never uses them.}
\label{fig:example}
\end{figure}

\section{Traces Can Be Tested as Program Hypotheses}
\label{sec:probe}

A failure with sound arithmetic contains no evidence of its own error, but the trace that states it also makes a claim about what the model did.
In this section we read that claim as a program and test it from outside the trace, which is metamorphic testing \citep{chen1998metamorphic} with the relation supplied by the candidate.

\subsection{From trace to program}
\label{sec:abstraction}

\begin{definition}[Abstraction]
\label{def:abstraction}
Given the chain $(e_1,\dots,e_m)$ with $e_j:\ell_j=c_j$, the abstraction $\alpha$ rewrites every numeric literal $x$ in $\ell_j$ as
\[
\alpha(x)=
\begin{cases}
q_i & \text{if } x \text{ equals question quantity } q_i,\\
r_{j'} & \text{if } x=c_{j'} \text{ for an earlier step } j'<j,\\
x & \text{otherwise},
\end{cases}
\]
and binds $r_j:=\alpha(\ell_j)$. The program $P_\tau$ is the resulting straight-line code, whose output is the variable bound to the trace's final answer.
\end{definition}

For the trace in \cref{fig:example}, the abstraction yields $r_0=q_3q_1$, $r_1=r_0-q_1$, $r_2=r_1-q_2$ and $\hat y=q_0r_2$.
We regard $P_\tau$ as a hypothesis about the procedure the trace describes, not as a claim about the model's internal computation.

\subsection{The metamorphic probe}

\begin{algorithm}[t]
\caption{Metamorphic probe of a single trace. No step reads a reference answer.}
\label{alg:probe}
\begin{algorithmic}[1]
\Require question $Q$ with quantities $\bm q$, trace $\tau$, frozen model $M$
\State $\mathcal E \gets \textsc{ParseChain}(\tau)$; \ $P_\tau \gets \alpha(\mathcal E)$
\State choose $q_i$ and a replacement $q_i'$ that satisfy all guards; \textbf{if} none exists \textbf{return} \textsc{abstain}
\State $Q' \gets Q$ with its single occurrence of $q_i$ rewritten to $q_i'$
\State $\hat y' \gets P_\tau(\bm q')$ \Comment{symbolic execution; changes propagate downstream}
\State $y' \gets \hat y(M(Q'))$ \Comment{one generation}
\State \Return \textsc{support} if $y'=\hat y'$, otherwise \textsc{refute}
\end{algorithmic}
\end{algorithm}

\Cref{alg:probe} gives the procedure.
The probe rewrites one quantity in the question, executes the program under that substitution to obtain the answer the trace predicts, and asks the frozen model the rewritten question once.
Agreement supports the hypothesis that the trace describes a procedure consistent with the model's behaviour on the perturbed question. Disagreement indicates a mismatch, but does not by itself establish that the trace is unfaithful, since the model may answer the perturbed question incorrectly.
For the example in \cref{fig:example}, raising the price from \$2 to \$5 makes the program predict 205 while the model answers 450, and raising the eggs from 16 to 32 makes it predict 178 while the model answers 72, so both probes refute the trace.
Because the expected answer comes from the candidate's own chain, the probe needs no reference answer at test time, which distinguishes it from template-based perturbation, where the expected answer is the template's.

\paragraph{Guards.}
A probe is worth a generation only if it can discriminate.
The most important guard requires the perturbed quantity to be \emph{load-bearing}, meaning that some candidate's chain consumes it; changing an unused number should leave the answer unchanged whatever the trace says, so good and bad traces would appear equally consistent.
Four further guards (\cref{app:guards}) require the quantity to occur exactly once in the question, the replacement not to collide with another number, integrality and sign to be preserved, and at least one candidate's prediction to move.
Under these guards, 937 of the 1,319 test questions (71.0\%) admit a probe and the remainder abstain (\cref{fig:coverage}).
Of the 937, 847 also have a parseable pass-1 answer, and all detection results in this section are computed on these 847.
The run used 988 probes in total: 886 of the 937 testable questions received one probe and 51 received two.
Traces that never engage with the question (\type{NR}) consume none of its quantities, and, as the load-bearing guard implies, none of them is among the testable traces.

\begin{figure}[t]
\centering
\includegraphics{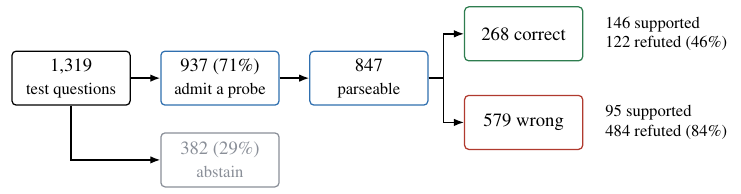}
\caption{Coverage of the metamorphic probe on the test set, and its verdicts. Detection results use the 847 testable questions with a parseable pass-1 answer; accuracy results use all 1,319 questions, with abstentions falling back to the non-probe policy.}
\label{fig:coverage}
\end{figure}

\subsection{Results}

\paragraph{Does a faithful trace predict its own perturbation?}\label{sec:faithful}
The probe is meaningful only if a trace that describes the right procedure also predicts the right answer to the perturbed question.
We check this offline on 1,500 training traces, using the inline calculator annotations in GSM8K to compute the true perturbed answer; this check is a diagnostic and plays no part at inference.
Correct traces predict the true perturbed answer 77.9\% of the time and wrong traces 7.2\%.
For this model, then, a trace that reaches the right answer usually also states how it got there, which chain-of-thought does not guarantee in general \citep{turpin2023unfaithful,lanham2023faithfulness}.

\begin{table}[t]
\centering
\caption{Results of the metamorphic probe on the 847 testable pass-1 traces. Left: verdicts by correctness. Right: the verdicts read as diagnostic tests.}
\label{tab:support}
\small
\begin{tabular}{@{}lrrrr@{\hspace{2.2em}}lr@{}}
\toprule
verdict & correct & wrong & total & $P(\text{correct})$ & diagnostic & value\\
\midrule
supported & 146 & 95 & 241 & 60.6\% & refutation rate, wrong traces & 83.6\% \\
refuted & 122 & 484 & 606 & 20.1\% & refutation rate, correct traces & 45.5\% \\
\cmidrule(r{2.2em}){1-5}
all & 268 & 579 & 847 & 31.6\% & likelihood ratio, support & 3.32 \\
 & & & & & likelihood ratio, refutation & 1.84 \\
 & & & & & odds ratio (Fisher $p=5.6\times10^{-29}$) & 6.1 \\
\bottomrule
\end{tabular}
\end{table}

\paragraph{Support and refutation.}
\Cref{tab:support} reports the verdicts.
A single supporting probe nearly doubles the probability that a trace is correct, from a base rate of 31.6\% to 60.6\% (odds ratio 6.1, Fisher $p=5.6\times10^{-29}$).
Refutation catches 83.6\% of wrong traces but also refutes 45.5\% of correct ones, so on its own it is weak evidence, with a likelihood ratio of 1.84 against 3.32 for support.
In an exploratory analysis, combining the probe score with the static score at equal weights raises detection AUROC on these 847 questions from 0.740 to 0.773, and across blend weights from 0.1 to 0.9 the blended AUROC ranges from 0.769 to 0.778, so the improvement does not depend on the choice of weight. Because the weight grid was first examined on the test set (\cref{app:audit}), this is not a clean held-out estimate.

\begin{figure}[t]
\centering
\includegraphics{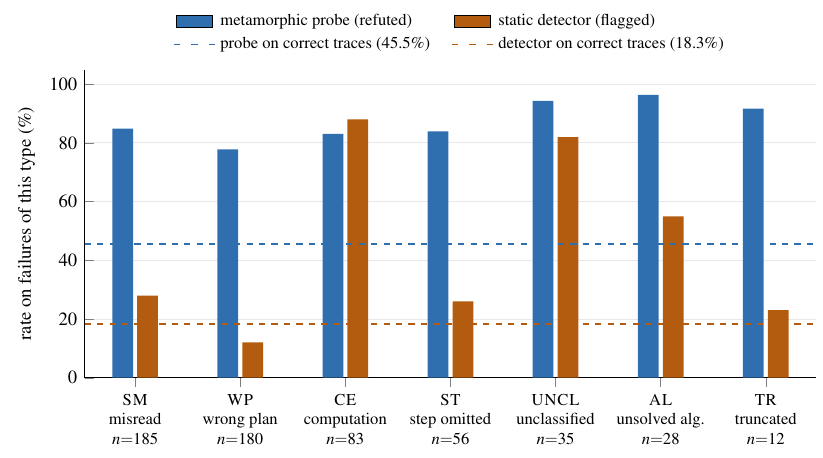}
\caption{Rate at which each verifier calls a trace wrong, by failure type, on the 579 wrong testable traces. Dashed lines give each verifier's rate on the 268 correct traces; a verifier is informative on a class only where its bar rises above its own line. The probe does so on every class (all per-class Fisher $p<0.002$, \cref{app:arms}); the static detector falls below its line on \type{WP}.}
\label{fig:per-type}
\end{figure}

\paragraph{Where the two verifiers differ.}
\Cref{fig:per-type} breaks the verdicts down by failure type.
Because each verifier pays its own false-positive rate on correct traces, what matters is how far each rises above its own baseline on each class.
On the structural classes, such as computation errors and unsolved algebra, the static detector is strong, because these failures leave a symbolic mark.
On the semantic classes it is weak or inverted, flagging 28\% of \type{SM} and 12\% of \type{WP} failures against 18.3\% of correct traces.
The probe refutes 157 of 185 \type{SM} failures and 140 of 180 \type{WP} failures, 81.4\% of the two classes together against 45.5\% of correct traces ($p=4.2\times10^{-21}$).
In short, the static detector sees errors in the arithmetic, while the probe sees errors in the choice of arithmetic.

\subsection{Why is refutation weak evidence?}
\label{sec:asymmetry}

The model answers perturbed questions no more reliably than original ones (28.6\% against 27.8\%).
Support therefore requires two things at once, a faithful program and a matching answer to $Q'$, whereas refutation can arise from either an unfaithful program or an unreliable answer.
The example in \cref{fig:example} shows the second case: neither of the model's perturbed answers is correct (the true values are 45 and 50), so the refutations there reflect the model's unreliability as much as the trace's.
Treating the two conditions as independent gives a per-probe support rate of roughly $0.779\times0.286\approx0.22$ for a correct trace and $0.072\times0.286\approx0.02$ for a wrong one.
The observed rates are higher in both groups (54.5\% and 16.4\%) because correctness on $Q'$ is correlated with correctness on $Q$, but the shape is the same: support is far more likely for a correct trace, while refutation is common for both.
We therefore allow support to promote a candidate and never allow refutation to demote one.
For the same reason we do not use refutation to trigger retries: with a likelihood ratio of 1.84 it would flag many correct traces, and at a damage rate of 74.6\% each such flag is costly (\cref{sec:accuracy}).
For the same reason, a marginal AUROC understates the probe, whose value lies in the precision it adds to the traces it supports.
\Cref{sec:accuracy} shows that precision is the quantity on which the usefulness of a verifier turns.

\subsection{What can the probe not detect?}
\label{sec:blind}

The probe tests whether a trace's procedure is consistent with the model's behaviour on a perturbed question, not whether that procedure is right.
This distinction defines its blind spot.
If the model applies the same wrong procedure to the perturbed question that it applied to the original, the program extracted from the trace predicts exactly what the model answers, and the probe returns support for a wrong trace.
A misreading that the model repeats under every rewording, which defeats self-consistency, can therefore defeat the probe as well.
The probe is useful on this model because wrong traces are rarely stable in this sense: only 7.2\% of wrong traces predict the true perturbed answer, against 77.9\% of correct ones (\cref{sec:faithful}), and only 95 of the 579 wrong testable traces (16.4\%) were supported.
Those 95 traces are the probe's misses, and they are concentrated where we would expect: among \type{WP} failures, 40 of 180 were not refuted, the highest miss rate of any class (\cref{fig:per-type}).
A model whose errors are more systematic than this one's would give the probe less to work with, and the separation in \cref{tab:support} should not be assumed to carry over without measurement.
Two further blind spots are structural.
Questions in which no load-bearing quantity can be perturbed safely receive no probe at all (29\% of the test set), and a trace whose chain cannot be parsed yields no program.

\section{Better Detection Does Not Necessarily Improve Accuracy}
\label{sec:accuracy}

In this section we measure the accuracy of every policy built on the two verifiers, derive the condition under which a verifier can improve accuracy at all, and explain why that condition is hard to meet for this model.

\subsection{Results}

\begin{table}[t]
\centering
\caption{End-to-end accuracy of every policy on the 1,319 test questions, with exact McNemar tests against pass-1.}
\label{tab:arms}
\small
\begin{tabular}{@{}lrrr@{}}
\toprule
policy & correct & accuracy & $p$ \\
\midrule
pass-1 & 367 & 27.82\% & -- \\
replace flagged & 375 & 28.43\% & 0.50 \\
replace flagged, with metamorphic veto & 374 & 28.35\% & 0.55 \\
select, symbolic & 379 & 28.73\% & 0.21 \\
select, metamorphic & 382 & 28.96\% & 0.091 \\
select, metamorphic + back-substitution & 382 & 28.96\% & 0.096 \\
\bottomrule
\end{tabular}
\end{table}

\Cref{tab:arms} reports the results.
Replacing every flagged answer gains 8 questions, and letting metamorphic evidence choose between the two candidates gains 15, reaching 28.96\% against 27.82\%.
No policy differs significantly from pass-1; the best has $p=0.091$.
Probe evidence decided only 90 of the 1,319 questions, because support is rare where it would matter.
On the development set, retrying the questions the detector flagged gained 4.8 points while retrying a random set of the same size lost 2.7; this comparison was not repeated on test, and the development set also informed calibration, so we treat it as supporting evidence only.

\subsection{When is a flag worth a retry?}

A second attempt can repair a wrong answer, but it can also break a correct one.
Suppose a detector flags $\Nflag$ questions with precision $\pi$, and let $R$ be the fraction of flagged wrong traces whose retry is correct and $D$ the fraction of flagged correct traces whose retry is wrong.

\begin{proposition}
\label{prop:identity}
If every flagged answer is replaced by its retry, the number of correctly answered questions changes by
\begin{equation}
\label{eq:delta}
\Delta=\Nflag\bigl(\pi R-(1-\pi)D\bigr),
\end{equation}
which is positive if and only if
\begin{equation}
\label{eq:breakeven}
\pi>\pistar=\frac{D}{R+D}.
\end{equation}
\end{proposition}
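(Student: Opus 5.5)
The plan is a counting argument over the flagged set, followed by one line of algebra. I will partition the $\Nflag$ flagged questions by the correctness of their pass-1 answer. By the definition of precision, $\pi\Nflag$ of them are wrong at pass-1 and $(1-\pi)\Nflag$ are correct. Unflagged questions keep their pass-1 answer under the replace policy, so they contribute nothing to the change. It is therefore enough to account for the flagged questions.

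Each flagged question falls into one of four cells, indexed by whether its pass-1 answer is correct and whether its retry is correct. The replace policy changes a question's correctness only in two cells. A wrong pass-1 answer followed by a correct retry gains one question. A correct pass-1 answer followed by a wrong retry loses one. The other two cells (wrong then wrong, correct then correct) leave the count unchanged.

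By the definitions of $R$ and $D$, the gaining cell contains $R\cdot\pi\Nflag$ questions and the losing cell contains $D\cdot(1-\pi)\Nflag$. Subtracting gives
\[
\Delta=\Nflag\bigl(\pi R-(1-\pi)D\bigr),
\]
which is \eqref{eq:delta}.

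For \eqref{eq:breakeven}, I will assume $\Nflag>0$, since otherwise $\Delta=0$ trivially. I will also assume $R+D>0$, so that $\pistar$ is defined. Then $\Delta>0$ if and only if $\pi R-(1-\pi)D>0$, which rearranges to $\pi(R+D)>D$. Dividing by $R+D>0$ gives $\pi>D/(R+D)$.

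There is no real obstacle here; the only step needing care is interpretation. $R$ and $D$ must be read as the empirical fractions on this particular flagged set, not as population rates. Read that way, the identity is exact. Read as expectations, the same argument holds by linearity of expectation.

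I will also state the degenerate case $R=D=0$ explicitly. In that case $\Delta=0$ for every $\pi$, and the proposition is read with the convention that no precision makes the flag pay.
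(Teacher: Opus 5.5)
Your proposal is correct and matches the paper's proof: unflagged questions are unchanged, flagged wrong traces contribute $\pi\Nflag R$ gains and flagged correct traces contribute $(1-\pi)\Nflag D$ losses, and rearranging $\pi R>(1-\pi)D$ gives the break-even condition. Your explicit assumptions $\Nflag>0$ and $R+D>0$ (without which $\pistar$ is undefined) are a small refinement that the paper leaves implicit.
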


\begin{proof}
Unflagged questions are unaffected. Flagged wrong traces become correct at rate $R$, contributing $\pi\Nflag R$, and flagged correct traces become wrong at rate $D$, contributing $-(1-\pi)\Nflag D$. Rearranging $\pi R>(1-\pi)D$ gives \eqref{eq:breakeven}.
\end{proof}

When $R$ and $D$ are measured on the flagged set itself, \eqref{eq:delta} is an exact count rather than a model.
On test, the detector flags 540 questions, of which 473 are wrong and 67 correct; retries repair 58 of the former ($R=12.3\%$) and damage 50 of the latter ($D=74.6\%$), giving $\Delta=58-50=+8$, the observed change.
The break-even precision is therefore $\pistar=85.9\%$, which the detector's 87.6\% exceeds by 1.7 points.
On the training split, the calibration (\cref{tab:calibration}) uses repair and damage rates of $R=16.1\%$ and $D=76.5\%$ from the training run, a break-even of 82.6\%.
The difference lies in repair: on the training split the diagnosis that conditions a retry comes from a symbolic cascade and is essentially always right, whereas on test it comes from the type head, which is right 59\% of the time.

Using \eqref{eq:delta} counterfactually, to ask what a different detector would achieve, requires holding $R$ and $D$ fixed, which assumes that repairability does not depend on which wrong traces are flagged.
This is a real assumption, since a better detector would newly flag the kinds of failures the current one misses, and the estimates below should be read as first-order.
With that caveat, and holding the budget at 540 retries and the repair rate at its measured value, a perfect detector would gain an estimated $540\times0.123\approx66$ questions (+5.0 points), and one allowed to flag all 952 failures would gain about 117 (+8.9 points).
The two partial derivatives of \eqref{eq:delta} are similar, $\partial\Delta/\partial\pi=\Nflag(R+D)\approx469$ and $\partial\Delta/\partial R=\Nflag\pi\approx473$, so detection and repair are equally effective at the margin.
What differs is the room each has.
Precision can rise by at most 12.4 points, worth at most 58 questions, whereas the repair rate can rise by up to 87.7 points, worth about 415 questions at the current precision.
Doubling $R$ alone would gain as much as a perfect detector (\cref{fig:breakeven}, \cref{tab:headroom}).

\begin{figure}[t]
\centering
\includegraphics{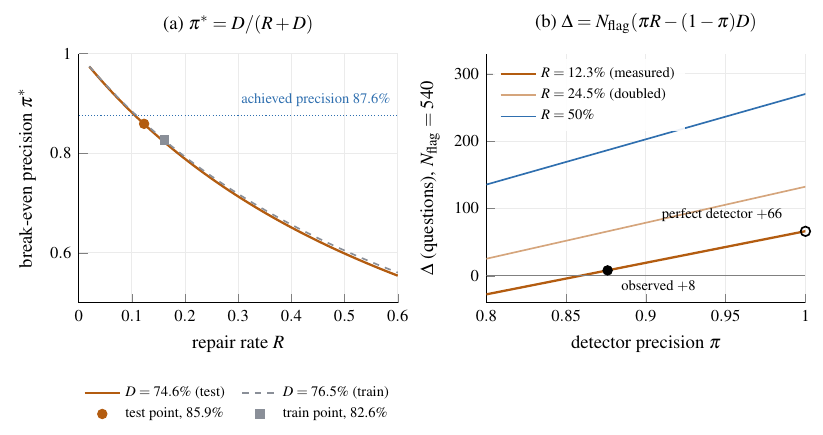}
\caption{The break-even condition. (a) Precision a detector needs before retries help, as a function of the repair rate, at the measured damage rates. (b) Net gain on the 540-flag test budget for three repair rates. Moving from the observed operating point to perfect precision gains 58 questions; doubling $R$ at the current precision gains about the same.}
\label{fig:breakeven}
\end{figure}

\begin{table}[t]
\centering
\caption{Counterfactual gains on the 540-flag test budget from \eqref{eq:delta}, with all other quantities at their measured values. These are first-order estimates, not experiments.}
\label{tab:headroom}
\small
\begin{tabular}{@{}lrr@{}}
\toprule
change & $\Delta$ (questions) & accuracy (points) \\
\midrule
none (observed) & $+8$ & $+0.6$ \\
no damage ($D=0$) & $+58$ & $+4.4$ \\
perfect detector ($\pi=1$) & $+66$ & $+5.0$ \\
doubled repair rate ($R=24.5\%$) & $+66$ & $+5.0$ \\
repair rate $R=50\%$ & $+187$ & $+14.1$ \\
perfect detector, unlimited budget & $+117$ & $+8.9$ \\
\bottomrule
\end{tabular}
\end{table}

\subsection{Why is the repair rate so low?}
\label{sec:why-r}

\paragraph{Empirical analysis.}
The low repair rate is not a matter of a careless retry.
Paired on the 1,068 training failures, the diagnosis-conditioned retry recovers 18.5\% against 15.1\% for an untyped retry (exact McNemar $p=0.010$); it is the best repair we built, and on test it still reaches only 12.3\%.
Only 32.9\% of failures contain a locatable false equation, and in 51.6\% of those the false equation is the first step.
At most $0.329\times(1-0.516)\approx16\%$ of failures therefore contain a correct prefix before their first error (\cref{fig:composition}).

\begin{figure}[t]
\centering
\includegraphics{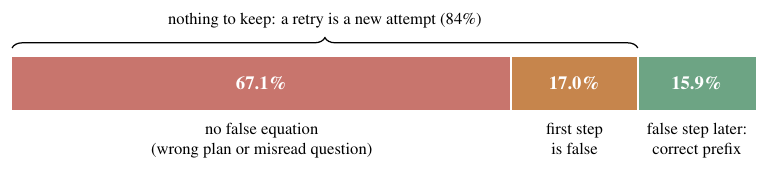}
\caption{Where the first error lies in pass-1 failures. Only failures whose first false equation comes after the first step (green) contain a correct prefix that a repair could preserve. The 17.0\% and 15.9\% split the 32.9\% of failures with a locatable false equation by whether that equation is the first step (51.6\%).}
\label{fig:composition}
\end{figure}

\paragraph{Intuitive explanation.}
Repair in the ordinary sense keeps a mostly correct derivation and fixes a localised fault.
For the remaining 84\% of failures there is nothing to keep, either because the trace contains no false equation at all or because it goes wrong at the first step, and a repair becomes a new attempt at the whole problem.
That attempt is also made on exactly the questions the model has just failed, where its accuracy is far below its 27.8\% average.
For a model whose failures are rarely localised, repairing a trace and producing a correct trace from scratch are the same problem; a verifier, however precise, can only route questions to that problem and cannot make it easier.

\section{Conclusion and Discussion}
\label{sec:conclusion}

Our work shows that, for the model we study, checking a trace against itself cannot detect most failures: two thirds of them contain no false equation, so the internal-consistency signals we evaluated reach a ceiling that a better classifier over the same signals is unlikely to lift.
Testing the trace as a program hypothesis against the model's own behaviour reaches those failures without a reference solution, a judge or a template, yet better detection does not improve accuracy, because a retry rarely repairs what it is sent to fix.
We draw three recommendations from these results.

\paragraph{Reporting repair and damage rates alongside detectors.}
AUROC alone does not determine whether a verifier improves outcomes.
\Cref{eq:breakeven} shows that a detector helps only when its precision exceeds a threshold set entirely by the action it triggers.
We encourage work on reference-free verification to report the repair and damage rates of that action, whether it is a retry, an escalation to a larger model, a tool call or a human review.

\paragraph{Testing a trace against the model, not only against itself.}
The two verifiers fail in complementary places, and in an exploratory analysis their combination detected failures better than either alone.
Checks of a trace against itself catch local errors in the arithmetic; tests of the trace against the model's behaviour under intervention catch errors in the choice of arithmetic.
Because support is strong evidence and refutation is weak, such tests are best used to raise precision, by confirming or vetoing a candidate, rather than to trigger retries.

\paragraph{Measuring failure composition before building repair.}
Whether detection or repair is the bottleneck depends on how a model fails.
A model whose failures are mostly localised slips sits in a very different region of \cref{fig:breakeven} from the one studied here.
Measuring the share of failures with no false equation, and the position of the first error, is cheap, needs no retry machinery, and indicates in advance whether a better detector can help.

\section{Limitations}
\label{sec:limitations}

Although our measurements are fully instrumented, they come from one model, one dataset and one prompting setup, and LLaMA-2-7B-chat, at 27.8\% pass-1 accuracy, is weak by current standards.
The accounting identity, the argument that self-checking cannot see errors of plan, and the mechanism by which a candidate-derived relation exposes them do not depend on the model; the share of failures without a false equation, the repair and damage rates, and therefore the conclusion that repair is the bottleneck may not transfer.
The probe refutes 45.5\% of correct traces, which follows from the model's 28.6\% accuracy on perturbed questions; we expect fewer false refutations from a stronger model but have not measured this.
Our parser does not distinguish asserted equations from hypothetical or retracted ones (``suppose $x=5$'', ``$12\times4=46$ \ldots{} no, 48''): a retracted false equation lowers equation validity, while the abstraction, which binds later steps to earlier results by value, turns a retracted result that is never reused into dead code. We did not measure how often these patterns occur, and models that self-correct frequently would require a retraction-aware parser.
The probe applies to 71\% of test questions and requires numeric quantities that the trace uses, so it does not reach proof-style reasoning.
Finally, we compare the probe with the static detector, but not empirically with self-consistency, an LLM judge or a trained process verifier; \cref{tab:compare} compares properties only.

\section*{Reproducibility Statement}

All experiments use the publicly available LLaMA-2-7B-chat with frozen weights and the public GSM8K benchmark.
Every measurement is a deterministic function of the generated traces, computed with regular expressions and sympy.
\Cref{app:signals,app:taxonomy,app:guards,app:calibration,app:arms} give the signal definitions, failure taxonomy, probe guards, detector calibration and all policy results, and \cref{app:audit} lists every error we found and corrected during the project together with its effect on the reported numbers.
The 988 probes behind the test results cost about 4.3 GPU-hours on a single NVIDIA T4.

\paragraph{Code availability.} The code, the notebooks used to run each stage, and the stored test-set outputs needed to recompute the reported test-set numbers on a CPU are available at \url{https://github.com/S2V3/Detection-is-not-enough}. The raw outputs of every run are archived at \url{https://github.com/S2V3/fade/tree/results}.


\clearpage
\appendix

\section{Trace Signals}
\label{app:signals}

All four signals are computed from the equation chain and the question, and none reads a reference answer.
\emph{Equation validity} is the fraction of stated equations that the algebra system confirms; a trace that writes $48-3=44$ loses here even when its plan is right.
\emph{Value overlap} compares the numbers the question gives with those the trace uses; a trace that ignores a given quantity, or introduces one that appears nowhere in the question, separates from a correct trace on this signal, and percentages are first normalised to a single convention.
\emph{Goal engagement} asks whether the trace ever computes the quantity the question asks for, matched on units and on the wording of the final question sentence, which catches a trace that solves a related quantity and stops.
\emph{Coherence} checks that every step refers to a value the question gave or an earlier step produced, rather than a number from nowhere.
A fifth signal, \emph{redundancy}, was built and then dropped: it showed no separation between correct and wrong traces, and at its threshold it flagged legitimately parallel steps as duplicates.
It is logged as a diagnostic and never enters a decision.

\section{Failure Taxonomy}
\label{app:taxonomy}

The diagnosis is symbolic.
A cascade of rules reads the equation chain, the question's numbers and the trace's structure, and abstains when no rule fires with enough evidence.
\Cref{tab:taxonomy} lists the eight types.
Structural failures have clear signatures and semantic ones do not: \type{SM} and \type{CE} are recognised with low recall (38.2\% and 42.5\%) and are two of the three largest classes, so more than half of those failures receive a mismatched diagnosis.
This caps the benefit of typed repair, and it is why the deployed pipeline falls back to an untyped retry below a confidence gate.

\begin{table}[h]
\centering
\caption{Failure types over the 1,068 training-split pass-1 failures, with the type head's out-of-fold recall (GroupKFold by question, 2,568 traces) and the count of each type among the 579 wrong testable test traces of \cref{fig:per-type}.}
\label{tab:taxonomy}
\small
\begin{tabular}{@{}llrrrr@{}}
\toprule
type & what went wrong & train $n$ & share & recall & test $n$ \\
\midrule
\type{WP} & wrong plan, sound arithmetic & 254 & 23.8\% & 69.2\% & 180 \\
\type{SM} & misread the question, solved a different problem & 253 & 23.7\% & 38.2\% & 185 \\
\type{CE} & computation error inside a good plan & 136 & 12.7\% & 42.5\% & 83 \\
\type{ST} & a needed intermediate never computed & 114 & 10.7\% & 62.8\% & 56 \\
\type{AL} & sets up symbols, never solves them & 94 & 8.8\% & 87.1\% & 28 \\
\type{UNCL} & no rule fired confidently & 88 & 8.2\% & 51.7\% & 35 \\
\type{NR} & never engaged with the question & 73 & 6.8\% & 95.3\% & 0 \\
\type{TR} & truncated generation & 56 & 5.2\% & 98.9\% & 12 \\
\bottomrule
\end{tabular}
\end{table}

\section{Probe Guards}
\label{app:guards}

A probe is generated only when all five of the following conditions hold.
\begin{enumerate}[leftmargin=*]
  \item \textbf{Load-bearing.} Some candidate's chain consumes the perturbed number. Changing an unused number should not move the answer, so good and bad traces would appear equally consistent.
  \item \textbf{Unique occurrence.} The number occurs exactly once in the question, so the rewrite is unambiguous.
  \item \textbf{No collision.} The replacement does not equal another number in the question, which could make the abstraction bind the wrong slot.
  \item \textbf{Sensible question.} Integrality and sign survive the change, so the rewritten question remains a valid word problem.
  \item \textbf{Discriminative.} At least one candidate's prediction moves under the substitution; otherwise the probe cannot distinguish candidates whatever the model answers.
\end{enumerate}
Each probe takes 15.8\,s on a T4.
Probe order is deterministic but depends on the target-preference rules, so resumption is keyed on the perturbation itself, the pair of original and replacement values, rather than on a probe index.
This also allows a one-probe run to be extended to two without regenerating anything; probe 0 is chosen identically either way, which we verified on 400 test rows with no mismatches.

\section{Detector Calibration}
\label{app:calibration}

A budget quantile needs the whole score distribution, which a pipeline answering one question at a time does not have.
The mapping from budget to threshold is therefore measured out of fold and stored with the model, so that each question can be judged on its own.

\begin{table}[h]
\centering
\caption{Budget-to-threshold calibration, measured out of fold on the 1,500 training-split pass-1 traces, as recorded in the calibration log. Expected net is $\mathrm{TP}\cdot R-\mathrm{FP}\cdot D$ with the training-run rates $R=16.1\%$ and $D=76.5\%$.}
\label{tab:calibration}
\small
\begin{tabular}{@{}rrrrr@{}}
\toprule
budget & flagged & $\pwrong$ cut & precision & expected net \\
\midrule
10\% & 150 & 0.90 & 97.3\% & $+20.4$ \\
20\% & 300 & 0.83 & 95.7\% & $+36.3$ \\
40\% & 600 & 0.683 & 90.7\% & $+44.7$ \\
50\% & 750 & 0.60 & 88.0\% & $+37.4$ \\
70\% & 1,050 & 0.48 & 82.1\% & $-5.0$ \\
100\% & 1,500 & 0.13 & 71.2\% & $-158.5$ \\
\bottomrule
\end{tabular}
\end{table}

Each row's expected net is computed from the flagged set at that budget as $\mathrm{TP}\cdot R-\mathrm{FP}\cdot D$, where TP and FP are the numbers of wrong and correct traces flagged and $R=16.1\%$ and $D=76.5\%$ are the repair and damage rates from the training run that the calibration uses.
For example, the 40\% budget flags 600 traces, of which 544 are wrong (90.7\%), giving $544\times0.161-56\times0.765=+44.7$; every row of \cref{tab:calibration} is reproduced in the same way.
The resulting break-even precision, $D/(R+D)=82.6\%$, falls between the 50\% and 70\% budgets, and the expected net is largest at 40\%.
The test pipeline uses the 40\% cut of 0.683.
\Cref{fig:calibration} plots the calibration against both break-even points: on the training split the detector stays profitable up to a budget between 50\% and 70\%, whereas on test the lower repair rate moves the break-even up to 85.9\%, which the test operating point clears by a small margin.

\begin{figure}[h]
\centering
\includegraphics{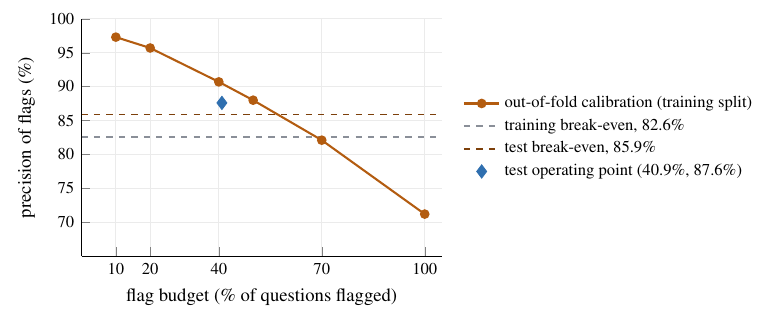}
\caption{Precision of the static detector's flags as a function of the flag budget. Below its break-even line, retrying flagged answers loses more correct answers than it repairs.}
\label{fig:calibration}
\end{figure}

\section{All Policies and Per-Class Tests}
\label{app:arms}

The \emph{veto} policy cancels a retry when the probe supports the first trace.
It should help in principle, since removing false positives raises precision, but it rescued only 3 of the 67 false positives, because flagged traces are overwhelmingly wrong and wrong traces are rarely supported.
\emph{Back-substitution} adds the model's answer to $Q'$, mapped back to the original numbers through the program, as an extra candidate at no extra cost; it changed no decision.
Fisher exact tests of each class's refutation rate against the 45.5\% rate on correct traces (122 of 268) give
\type{WP} 140/180, $p=5.5\times10^{-12}$;
\type{SM} 157/185, $p=3.3\times10^{-18}$;
\type{CE} 69/83, $p=6.6\times10^{-10}$;
\type{ST} 47/56, $p=1.1\times10^{-7}$;
\type{AL} 27/28, $p=6.5\times10^{-8}$;
\type{UNCL} 33/35, $p=8.6\times10^{-9}$;
\type{TR} 11/12, $p=1.9\times10^{-3}$.
These sum to 484 refuted wrong traces, which together with the 122 refuted correct traces give the 606 refutations in \cref{tab:support}.

\section{Errors Found and Corrected}
\label{app:audit}

We list every error we found that could have affected a reported number, with its effect and its correction.
The numbers in this appendix were measured on the intermediate version of the pipeline in which each problem was found, before the later corrections had been made.
They quantify the size of each correction at the time it was made and are not directly comparable with one another or with \cref{tab:detector}, which reports the final detector evaluated with every correction below in place.

\paragraph{A reference-derived feature reached the classifier.}
The number of annotated solution steps comes from the reference solution and had entered the feature set, lifting out-of-fold detection AUROC in that version from 0.717 to 0.791.
It cannot be computed where no annotation exists, so we removed it and added an assertion that rejects such features.

\paragraph{Question-level leakage inflated the type head.}
Each question contributes a first attempt and a retry.
StratifiedKFold split on traces, so a question's first attempt could sit in training while its retry sat in test, letting the model recognise the question instead of reading the trace.
In the version where this was found, switching to GroupKFold left the detection head essentially unchanged (AUROC 0.752 to 0.753) and lowered the type head by 4.4 points (0.657 to 0.613). These figures are computed over all 2,568 training traces, first attempts and retries together; \cref{tab:detector} reports first attempts only, the population the deployed detector sees, and the two populations differ in how legible their errors are.
The cause was a question-level TF--IDF feature, and all reported numbers use GroupKFold.

\paragraph{Union scoring is not deployable.}
Counting a question as solved when either the first answer or the retry is correct requires a reference answer to choose between them, and a deployed system that replaced its first answer cannot fall back on the answer it rejected.
Every accuracy in this paper is what a deployed system would return.

\paragraph{The blend weight.}
An earlier version chose the detection blend weight by maximising AUROC on the test set, which gave 0.778 at a weight of 0.25 on the probe score. We now report the equal-weight value, 0.773, together with the range over weights from 0.1 to 0.9 (0.769 to 0.778). Because the weight grid was first examined on the test set, we treat the blended result as exploratory.

\paragraph{Typed versus untyped repair at test.}
On the training split the diagnosis comes from the symbolic cascade and is essentially always right, while at test it comes from the classifier, which is right 59\% of the time, and 28\% of retries fall back to the untyped arm.
The test-time comparison is therefore underpowered rather than negative: with 94 paired retries, a true five-point difference would be detected 14\% of the time and an eight-point difference 33\% of the time.

\paragraph{Extraction.}
Two parser defects were found and corrected.
Percentages were emitted in both conventions, which inflated value overlap for any trace that mentioned one.
Binary minus was read as a sign, so ``60-20'' was tokenised as two numbers and the equation silently dropped; this affected 14 lines across 13 traces (0.9\% of the training run), all of them subtractions, which is why it was hard to notice.


\begin{thebibliography}{30}
\small
\setlength{\itemsep}{1pt}

\bibitem[Bottman et~al.(2026)]{bottman2026operadic}
N.~Bottman, Y.~Liu, and K.~Richardson.
\newblock Operadic consistency: a label-free signal for compositional reasoning failures in {LLMs}.
\newblock \emph{arXiv:2606.13649}, 2026.

\bibitem[Chen et~al.(1998)]{chen1998metamorphic}
T.~Y. Chen, S.~C. Cheung, and S.~M. Yiu.
\newblock Metamorphic testing: a new approach for generating next test cases.
\newblock Technical Report HKUST-CS98-01, Hong Kong University of Science and Technology, 1998.

\bibitem[Chen et~al.(2018)]{chen2018mtreview}
T.~Y. Chen, F.-C. Kuo, H.~Liu, P.-L. Poon, D.~Towey, T.~H. Tse, and Z.~Q. Zhou.
\newblock Metamorphic testing: a review of challenges and opportunities.
\newblock \emph{ACM Computing Surveys}, 51(1), 2018.

\bibitem[Chen et~al.(2023)]{chen2023pot}
W.~Chen, X.~Ma, X.~Wang, and W.~W. Cohen.
\newblock Program of thoughts prompting: disentangling computation from reasoning for numerical reasoning tasks.
\newblock \emph{Transactions on Machine Learning Research}, 2023.

\bibitem[Cobbe et~al.(2021)]{cobbe2021gsm8k}
K.~Cobbe, V.~Kosaraju, M.~Bavarian, M.~Chen, H.~Jun, L.~Kaiser, M.~Plappert, J.~Tworek, J.~Hilton, R.~Nakano, C.~Hesse, and J.~Schulman.
\newblock Training verifiers to solve math word problems.
\newblock \emph{arXiv:2110.14168}, 2021.

\bibitem[Gao et~al.(2023)]{gao2023pal}
L.~Gao, A.~Madaan, S.~Zhou, U.~Alon, P.~Liu, Y.~Yang, J.~Callan, and G.~Neubig.
\newblock {PAL}: Program-aided language models.
\newblock In \emph{International Conference on Machine Learning}, 2023.

\bibitem[Huang et~al.(2024)]{huang2024selfcorrect}
J.~Huang, X.~Chen, S.~Mishra, H.~S. Zheng, A.~W. Yu, X.~Song, and D.~Zhou.
\newblock Large language models cannot self-correct reasoning yet.
\newblock In \emph{International Conference on Learning Representations}, 2024.

\bibitem[Kamoi et~al.(2024)]{kamoi2024survey}
R.~Kamoi, Y.~Zhang, N.~Zhang, J.~Han, and R.~Zhang.
\newblock When can {LLMs} actually correct their own mistakes? {A} critical survey of self-correction of {LLMs}.
\newblock \emph{Transactions of the Association for Computational Linguistics}, 12, 2024.

\bibitem[Lanham et~al.(2023)]{lanham2023faithfulness}
T.~Lanham, A.~Chen, A.~Radhakrishnan, B.~Steiner, C.~Denison, D.~Hernandez, D.~Li, E.~Durmus, E.~Hubinger, J.~Kernion, et~al.
\newblock Measuring faithfulness in chain-of-thought reasoning.
\newblock \emph{arXiv:2307.13702}, 2023.

\bibitem[Lightman et~al.(2023)]{lightman2023verify}
H.~Lightman, V.~Kosaraju, Y.~Burda, H.~Edwards, B.~Baker, T.~Lee, J.~Leike, J.~Schulman, I.~Sutskever, and K.~Cobbe.
\newblock Let's verify step by step.
\newblock \emph{arXiv:2305.20050}, 2023.

\bibitem[Lyu et~al.(2023)]{lyu2023faithful}
Q.~Lyu, S.~Havaldar, A.~Stein, L.~Zhang, D.~Rao, E.~Wong, M.~Apidianaki, and C.~Callison-Burch.
\newblock Faithful chain-of-thought reasoning.
\newblock In \emph{Proceedings of IJCNLP-AACL}, 2023.

\bibitem[Madaan et~al.(2023)]{madaan2023selfrefine}
A.~Madaan, N.~Tandon, P.~Gupta, S.~Hallinan, L.~Gao, S.~Wiegreffe, U.~Alon, N.~Dziri, S.~Prabhumoye, Y.~Yang, et~al.
\newblock Self-Refine: iterative refinement with self-feedback.
\newblock In \emph{Advances in Neural Information Processing Systems}, 2023.

\bibitem[Mirzadeh et~al.(2024)]{mirzadeh2024gsmsymbolic}
I.~Mirzadeh, K.~Alizadeh, H.~Shahrokhi, O.~Tuzel, S.~Bengio, and M.~Farajtabar.
\newblock {GSM-Symbolic}: Understanding the limitations of mathematical reasoning in large language models.
\newblock \emph{arXiv:2410.05229}, 2024.

\bibitem[Ribeiro et~al.(2020)]{ribeiro2020checklist}
M.~T. Ribeiro, T.~Wu, C.~Guestrin, and S.~Singh.
\newblock Beyond accuracy: behavioral testing of {NLP} models with {CheckList}.
\newblock In \emph{Proceedings of ACL}, 2020.

\bibitem[Sahoo et~al.(2025)]{sahoo2025catch}
S.~Sahoo, V.~Jain, S.~Vats, S.~Mohapatra, R.~Min, A.~Chadha, and D.~Chaudhary.
\newblock Catch me if you can: How smaller reasoning models pretend to reason with mathematical fidelity.
\newblock \emph{arXiv:2512.00552}, 2025.

\bibitem[Segura et~al.(2016)]{segura2016survey}
S.~Segura, G.~Fraser, A.~B. Sanchez, and A.~Ruiz-Cort\'es.
\newblock A survey on metamorphic testing.
\newblock \emph{IEEE Transactions on Software Engineering}, 42(9), 2016.

\bibitem[Shukla et~al.(2026)]{shukla2026silent}
V.~Shukla, V.~Shukla, Atul, D.~Mishra, and M.~K. Das.
\newblock A reference-free score for detecting silent reasoning failures in large language models.
\newblock \emph{arXiv:2607.26102}, 2026.

\bibitem[Touvron et~al.(2023)]{touvron2023llama2}
H.~Touvron, L.~Martin, K.~Stone, P.~Albert, A.~Almahairi, Y.~Babaei, N.~Bashlykov, S.~Batra, P.~Bhargava, S.~Bhosale, et~al.
\newblock Llama 2: open foundation and fine-tuned chat models.
\newblock \emph{arXiv:2307.09288}, 2023.

\bibitem[Turpin et~al.(2023)]{turpin2023unfaithful}
M.~Turpin, J.~Michael, E.~Perez, and S.~R. Bowman.
\newblock Language models don't always say what they think: unfaithful explanations in chain-of-thought prompting.
\newblock In \emph{Advances in Neural Information Processing Systems}, 2023.

\bibitem[Uesato et~al.(2022)]{uesato2022process}
J.~Uesato, N.~Kushman, R.~Kumar, F.~Song, N.~Siegel, L.~Wang, A.~Creswell, G.~Irving, and I.~Higgins.
\newblock Solving math word problems with process- and outcome-based feedback.
\newblock \emph{arXiv:2211.14275}, 2022.

\bibitem[Wang et~al.(2023)]{wang2023selfconsistency}
X.~Wang, J.~Wei, D.~Schuurmans, Q.~Le, E.~Chi, S.~Narang, A.~Chowdhery, and D.~Zhou.
\newblock Self-consistency improves chain of thought reasoning in language models.
\newblock In \emph{International Conference on Learning Representations}, 2023.

\bibitem[Wei et~al.(2022)]{wei2022cot}
J.~Wei, X.~Wang, D.~Schuurmans, M.~Bosma, B.~Ichter, F.~Xia, E.~Chi, Q.~Le, and D.~Zhou.
\newblock Chain-of-thought prompting elicits reasoning in large language models.
\newblock In \emph{Advances in Neural Information Processing Systems}, 2022.

\bibitem[Weng et~al.(2023)]{weng2023selfverification}
Y.~Weng, M.~Zhu, F.~Xia, B.~Li, S.~He, S.~Liu, B.~Sun, K.~Liu, and J.~Zhao.
\newblock Large language models are better reasoners with self-verification.
\newblock In \emph{Findings of the Association for Computational Linguistics: EMNLP 2023}, 2023.

\bibitem[Zhang et~al.(2024)]{zhang2024small}
Y.~Zhang, M.~Khalifa, L.~Logeswaran, J.~Kim, M.~Lee, H.~Lee, and L.~Wang.
\newblock Small language models need strong verifiers to self-correct reasoning.
\newblock In \emph{Findings of the Association for Computational Linguistics: ACL 2024}, 2024.

\bibitem[Zhong et~al.(2026)]{zhong2026verytrace}
N.~Zhong, A.~E. Tanriverdi, K.~Kale, and S.~Vishwanath.
\newblock {VeryTrace}: Verifying reasoning traces through compilable formalism and structured verification.
\newblock In \emph{LM4Plan Workshop at ICML}, 2026.

\bibitem[Zhou et~al.(2025)]{zhou2025checklist}
Z.~Zhou, S.~Liu, M.~Ning, W.~Liu, J.~Wang, D.~F. Wong, X.~Huang, Q.~Wang, and K.~Huang.
\newblock Is your model really a good math reasoner? {E}valuating mathematical reasoning with checklist.
\newblock In \emph{International Conference on Learning Representations}, 2025.

\bibitem[Zhou et~al.(2026)]{zhou2026lgmt}
Z.~Zhou, M.~Li, X.~Fang, X.~Zhou, W.~Lin, and Z.~Zheng.
\newblock {LGMT}: Logic-grounded metamorphic testing for evaluating the reasoning reliability of {LLMs}.
\newblock \emph{arXiv:2605.23965}, 2026.

\end{thebibliography}
\end{document}